\newif\ifels
	\newproof{proof}{Proof}
	\newtheorem{theorem}{Theorem}
	\journal{Applied and Computational Harmonic Analysis}
\definecolor{blue1}{rgb}{0,0,0}
\renewcommand\thesection{\arabic{section}}
\titleformat{\section}[hang]{\color{blue1}\large\bfseries\sffamily}{\thesection}{0mm}{. }[]
\titleformat{\subsection}[hang] {\color{blue1}\bfseries\sffamily}{\thesubsection}{0em}{. }[]
\titleformat{\subsubsection}[hang] {\color{blue1}\sffamily}{\thesubsubsection}{0em}{. }[]
\titlespacing*{\section}{1em}{3.5ex plus .2ex minus .2ex}{1ex plus .2ex}
\titlespacing*{\subsection}{0em}{3ex plus .2ex minus .2ex}{1ex plus .2ex}
\titlespacing*{\subsubsection}{0em}{3ex plus .2ex minus .2ex}{1ex plus .2ex}
\renewenvironment{abstract}{{\color{blue1}\small\bfseries Abstract.}\footnotesize}{\par \vskip .1in}
\def\@setauthors{
\begingroup 
\def \thanks{\protect\thanks@warning}
\trivlist \centering\footnotesize \@topsep30\p@\relax \advance\@topsep by -\baselineskip
\item\relax \author@andify \authors \def\\{\protect\linebreak} {\color{blue1}\large\authors} \endtrivlist \endgroup}
\def\@settitle{\centering{\color{blue1} \Large \bfseries \bfseries \@title \par}}
	\newtheorem{theorem}{Theorem}[section]
\newcommand{\fou}{\ensuremath{\vec{u}}}
\renewcommand{\leq}{\ensuremath{\leqslant}}
\renewcommand{\geq}{\ensuremath{\geqslant}}
\newcommand{\adjoint}{\ensuremath{{\intercal}}}
\newcommand{\norm}[1]{\ensuremath{\left\| #1\right\|}}
\newcommand{\ma}[1]{\ensuremath{\mathsf{#1}}}
\renewcommand{\vec}[1]{\ensuremath{\bm{#1}}}
\newcommand{\diag}{\ensuremath{{\rm diag}}}
\newcommand{\ie}{\textit{i.e.}}
\newtheorem{definition}[theorem]{Definition}
\newtheorem{lemma}[theorem]{Lemma}
\algnewcommand\algorithmicinput{\textbf{Input:}}
\algnewcommand\Input{\item[\algorithmicinput]}
\algnewcommand\algorithmicoutput{\textbf{Output:}}
\algnewcommand\Output{\item[\algorithmicoutput]}
\newcommand{\longtitle}{{Optimized Algorithms to Sample Determinantal Point Processes: \\ A Technical Report}}
\newcommand{\NTlong}{{Nicolas Tremblay}}
\newcommand{\SBlong}{{Simon Barthelm\'e}}
\newcommand{\POAlong}{{Pierre-Olivier Amblard}}
\newcommand{\NTshort}{{N.~Tremblay}}
\newcommand{\SBshort}{{S.~Barthelm\'e}}
\newcommand{\POAshort}{{P-O.~Amblard}}
\newcommand{\CNRS}{{CNRS, Univ Grenoble-Alpes, Gipsa-lab, France}}
\newcommand{\funds}{{This work was partly funded by LabEx PERSYVAL-Lab (ANR-11-LABX-0025-01), 
ANR GenGP (ANR-16-CE23-0008), Grenoble Data Institute and LIA CNRS/Melbourne Univ Geodesic.}}
	\title{\longtitle\tnoteref{t1}}
	\author[cnrs]{\NTlong}
	\author[cnrs]{\SBlong}
	\author[cnrs]{\POAlong}
	\address[cnrs]{\CNRS}
	\title[\longtitle]{\longtitle}
	\author[\NTshort]{\NTlong}
	\author[\SBshort]{\SBlong}
	\author[\POAshort]{\POAlong}
	\thanks{All three authors are with \CNRS. \funds}
\begin{document}

\ifels
	\begin{abstract}
  In this technical report, we discuss several sampling algorithms for Determinantal Point  Processes (DPP). DPPs have recently gained a broad interest in the machine learning and statistics literature as random point processes with negative correlation, \ie, ones that can generate a "diverse" sample from a set of items. They are parametrized by a matrix $\ma{L}$, called $L$-ensemble, that encodes the correlations between items. The standard sampling algorithm is separated in three phases: 1/~eigendecomposition of $\ma{L}$, 2/~an eigenvector sampling phase where $\ma{L}$'s eigenvectors are sampled independently via a Bernoulli variable parametrized by their associated eigenvalue, 3/~a Gram-Schmidt-type orthogonalisation procedure of the sampled eigenvectors.
  In a naive implementation, the computational cost of the third step is on average $\mathcal{O}(N\mu^3)$ where $\mu$ is the average number of samples of the DPP. We give an algorithm which runs in $\mathcal{O}(N\mu^2)$ and is extremely simple to implement. If memory is a constraint, we also describe a dual variant with reduced memory costs. In addition, we discuss implementation details often missing in the literature. 
\end{abstract}
	\maketitle
	
\section{Introduction}
\label{sec:intro}

Determinantal Point Processes enable a form of subsampling that generalises sampling without replacement: from a set of items $\mathcal{X}$, we draw a random subset $\mathcal{S} \subseteq \mathcal{X}$ such that $\mathcal{S}$ preserves some of the diversity in $\mathcal{X}$.
Here we focus solely on discrete DPPs, where $\mathcal{X}$ has $N$ elements. A generic algorithm for exact sampling from discrete DPPs was given in \cite{hough_determinantal_2006}, and popularised by \cite{kulesza_determinantal_2012}. Implemented naively, this algorithm has cost $\mathcal{O}(N\mu^{3})$. The point of this note is to derive and describe a simpler algorithm with cost $\mathcal{O}(N\mu^{2})$. We make no great claim to novelty, as other algorithms with the same asymptotic cost exist, but the one we give is very easily stated and trivial to implement.

\subsection{Notations}
 Sets are in upper-case calligraphic letters: $\mathcal{S}, \mathcal{K}, \ldots$. Vectors are in lower-case bold letters: $\vec{x}, \vec{f}, \ldots$.  The $i$-th entry of vector $\vec{x}$ is written either $x_i$ or $x(i)$. For instance, $\vec{\delta}_i$ is the vector such that: $\forall j\neq i, \delta_i(j)=0$ and $\delta_i(i)=1$.   Matrices are in upper case letters: $\ma{M}, \ma{U}, \ldots$. For instance $\ma{I}_d\in\mathbb{R}^{d\times d}$ is the identity matrix in dimension $d$. If the dimension is not specified, it can be guessed via the context. The $(i,j)$-th element of matrix $\ma{M}$ is written $\ma{M}_{i,j}$ and its $j$-th column is written $\vec{m}_j$. $\ma{M}_{\mathcal{A},\mathcal{B}}$ is the restriction of matrix $\ma{M}$ to the rows (resp. columns) indexed by the elements of $\mathcal{A}$ (resp. $\mathcal{B}$). We write $\ma{M}_{\mathcal{A}}$ as a shorthand for $\ma{M}_{\mathcal{A},\mathcal{A}}$. Moreover, $[N]$ stands for the set of $N$ first integers $\{1,2,\ldots, N\}$. Finally, the notation $\det(\ma{M})$ stands for the determinant of matrix $\ma{M}$. 

\subsection{Definitions}
We give here a definition of DPPs via $L$-ensembles. Equivalent formulations are based on the marginal kernel and specify marginal probabilities (see \emph{e.g.}~\cite{kulesza_determinantal_2012} for details).
\begin{definition}[Determinantal Point Process] 
	\label{def:DPP} Consider a point process, \ie, a process that randomly draws an element $\mathcal{S}\in[N]$. It is determinantal if there exists a semi-definite positive (SDP) matrix $\ma{L}\in\mathbb{R}^{N\times N}$ such that the probability of sampling $\mathcal{S}$ is:
	$$\mathbb{P}(\mathcal{S}) = \frac{\det(\ma{L}_{\mathcal{S}})}{\det(\ma{I}+\ma{L})}.$$
	$\ma{L}$ is called the $L$-ensemble of the DPP.
\end{definition}
$\ma{L}$ being SDP it is diagonalisable in:
\begin{align}
\ma{L} = \ma{U\Lambda U}^\top,
\end{align}
with $\ma{U}=(\vec{u}_1|\vec{u}_2|\ldots|\vec{u}_N)\in\mathbb{R}^{N\times N}$ its set of orthonormal eigenvectors and $\ma{\Lambda}=\diag(\lambda_1|\ldots|\lambda_N)\in\mathbb{R}^{N\times N}$ the diagonal matrix of sorted eigenvalues: $0\leq\lambda_1\leq\ldots\leq\lambda_N$. 

The number of samples $|\mathcal{S}|$ of a DPP is random and is distributed according to a sum of $N$ Bernoulli variables of parameters $\lambda_n/(1+\lambda_n)$~\cite{kulesza_determinantal_2012}. The expected number of samples is thus $$\mu=\mathbb{E}(|\mathcal{S}|)=\sum_n \frac{\lambda_n}{1+\lambda_n}$$ and the variance of the number of samples is 
$$\text{Var}(|\mathcal{S}|) = \sum_n \frac{\lambda_n}{(1+\lambda_n)^2}.$$

In many cases, it is preferable to constrain the DPP to output a fixed number of samples $k$. This leads to $k$-DPPs:

\begin{definition}[$k$-DPP~\cite{kulesza_determinantal_2012}] 
	\label{def:kDPP} Consider a point process that randomly draws an element $\mathcal{S}\in[N]$. This process is a $k$-DPP with $L$-ensemble $\ma{L}$ if:
	\begin{enumerate}
		\item  $\forall\mathcal{S}\text{ s.t. } |\mathcal{S}|\neq k, ~ \mathbb{P}(\mathcal{S}) = 0$
		\item $\forall\mathcal{S}\text{ s.t. } |\mathcal{S}|= k, ~ \mathbb{P}(\mathcal{S}) = \frac{1}{Z} \det(\ma{L}_{\mathcal{S}})$, with $Z$ the constant s.t.  $\displaystyle\frac{1}{Z}\sum_{\mathcal{S}\text{ s.t. } |\mathcal{S}|=k}\det(\ma{L}_{\mathcal{S}})=1$.
	\end{enumerate}
\end{definition}

\subsection{The standard DPP sampling algorithm}
The standard algorithm to sample a DPP from a $L$-ensemble given its eigendecomposition may be decomposed in two steps~\cite{hough_determinantal_2006}:
\begin{enumerate}
	\item[i/] Sample eigenvectors. Draw $N$ Bernoulli variables with parameters $\lambda_n/(1+\lambda_n)$: for $n=1,\ldots,N$, add $n$ to the set of sampled indices $\mathcal{K}$ with probability $\lambda_n/(1+\lambda_n)$. We generically denote by $k$ the number of elements in $\mathcal{K}$. Note that the expected value of $k$ is $\mu$.
	\item[ii/] Run alg.~\ref{alg:sampling_DPP_standard} to sample a $k$-DPP with projective $L$-ensemble $\ma{P}=\ma{V} \ma{V}^\adjoint$ where  $\ma{V}\in\mathbb{R}^{N\times k}$ concatenates all eigenvectors $\fou_n$ such that $n\in\mathcal{K}$. Note that $\ma{V}^\adjoint \ma{V}=\ma{I}_k$. 
\end{enumerate}

For the proof that the combination of these two steps samples a DPP with $L$-ensemble $\ma{L}$, we refer the reader to the papers~\cite{hough_determinantal_2006, kulesza_determinantal_2012}. 

\begin{algorithm}
	\caption{Standard $k$-DPP sampling algorithm with projective $L$-ensemble $\ma{P}=\ma{V} \ma{V}^\adjoint$}
	\label{alg:sampling_DPP_standard}
	\begin{algorithmic}
		\Input $\ma{V}\in\mathbb{R}^{N\times k}$ such that $\ma{V}^\adjoint \ma{V}=\ma{I}_k$\\
		$\mathcal{S} \leftarrow \emptyset$\\
		\textbf{for} $n=1,\ldots,k$ \textbf{do}:\\
		\hspace{0.5cm}$\bm{\cdot}$ Define $\vec{p}\in\mathbb{R}^N$ : $\forall i, \quad p(i) = \norm{\ma{V}^\adjoint\vec{\delta}_i}^2$.\\
		\hspace{0.5cm}$\bm{\cdot}$ Draw $s_n$ with probability $\mathbb{P}(s)=p(s)/\sum_{i}p(i)$\\
		\hspace{0.5cm}$\bm{\cdot}$ $\mathcal{S} \leftarrow \mathcal{S}\cup\{s_n\}$\\
		\hspace{0.5cm}$\bm{\cdot}$ Compute $\ma{V}_{\perp}\in\mathbb{R}^{N\times (k-n)}$, the orthonormal basis for the subspace spanned by the columns of\\
		\hspace{1cm} $\ma{V}$ and orthogonal to $\vec{\delta}_{s_n}$. \\
		\hspace{0.5cm}$\bm{\cdot}$ Update $\ma{V}\leftarrow\ma{V}_{\perp}$.\\
		\textbf{end for}
		\Output $\mathcal{S}$ of size $k$.
	\end{algorithmic}
\end{algorithm}

The expensive step in alg.~\ref{alg:sampling_DPP_standard} is the orthogonalisation, which could be sped up via low-rank updates to a QR or SVD decomposition. We suggest a more direct route, leading to algorithm  \ref{alg:sampling_DPP_efficient}. To derive this algorithm, we first introduce alg.~\ref{alg:sampling_DPP_alt} as a stepping stone to simplify the proofs, but readers only interested in the implementation can skip ahead to alg.~\ref{alg:sampling_DPP_efficient}.

\section{Improved sampling algorithm}
\subsection{Primal representation}

An alternative to alg.~\ref{alg:sampling_DPP_standard} to compute step ii/ above is given in alg.~\ref{alg:sampling_DPP_alt}. In fact:

\begin{algorithm}
	\caption{Alternative $k$-DPP sampling algorithm with projective $L$-ensemble $\ma{P}=\ma{V} \ma{V}^\adjoint$}
	\label{alg:sampling_DPP_alt}
	\begin{algorithmic}
		\Input $\ma{V}\in\mathbb{R}^{N\times k}$ such that $\ma{V}^\adjoint \ma{V}=\ma{I}_k$\\
		Write $\forall i, ~~\vec{y}_i=\ma{V}^\adjoint\vec{\delta_i}\in\mathbb{R}^{k}$.\\
		$\mathcal{S} \leftarrow \emptyset$\\
		Define $\vec{p}_0\in\mathbb{R}^N$ : $\forall i, \quad p_0(i) = \norm{\vec{y}_i}^2$\\
		$\vec{p} \leftarrow\vec{p}_0$\\
		\textbf{for} $n=1,\ldots,k$ \textbf{do}:\\
		\hspace{0.5cm}$\bm{\cdot}$ Draw $s_n$ with probability $\mathbb{P}(s)=p(s)/\sum_{i}p(i)$\\
		\hspace{0.5cm}$\bm{\cdot}$ $\mathcal{S} \leftarrow \mathcal{S}\cup\{s_n\}$\\
		\hspace{0.5cm}$\bm{\cdot}$ Compute $\ma{P}_\mathcal{S}=\ma{V}_{\mathcal{S},[k]} \ma{V}_{\mathcal{S},[k]}^\adjoint\in\mathbb{R}^{n\times n}$, its inverse $\ma{P}_\mathcal{S}^{-1}$ and $\ma{P}_{\mathcal{S},i} = \ma{V}_{\mathcal{S},[k]}\vec{y}_i\in\mathbb{R}^{n}$.\\
		\hspace{0.5cm}$\bm{\cdot}$ Update $\vec{p}$ :
		$\forall i\quad p(i) = p_0(i) - \ma{P}_{\mathcal{S},i}^\adjoint \ma{P}_\mathcal{S}^{-1} \ma{P}_{\mathcal{S},i}$\\
		\textbf{end for}
		\Output $\mathcal{S}$ of size $k$.
	\end{algorithmic}
\end{algorithm}

\begin{lemma}
	Alike alg.~\ref{alg:sampling_DPP_standard}, alg.~\ref{alg:sampling_DPP_alt} samples a $k$-DPP with projective $L$-ensemble $\ma{P}=\ma{V} \ma{V}^\adjoint$.
\end{lemma}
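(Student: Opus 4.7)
My plan is to show that at every iteration $n$, the probability vector $\vec{p}$ computed by Algorithm~\ref{alg:sampling_DPP_alt} equals the one computed by Algorithm~\ref{alg:sampling_DPP_standard}. Since the two algorithms only differ in how they compute $\vec{p}$, matching the conditional distribution of $s_n$ at each step (given $s_1,\ldots,s_{n-1}$) forces the joint distribution of $\mathcal{S}$ to coincide.

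The first step is a simple induction on $n$ describing the $V$ that appears in Algorithm~\ref{alg:sampling_DPP_standard}. Let $\ma{V}_0$ denote the input matrix and $\ma{P}_0=\ma{V}_0\ma{V}_0^\adjoint$ the orthogonal projector on its column space. The construction ``$\ma{V}_\perp$ is an orthonormal basis of the columns of $\ma{V}$ orthogonal to $\vec\delta_{s_n}$'' implies inductively that after $n$ iterations the current matrix $\ma{V}^{(n)}$ has columns spanning $\mathrm{range}(\ma{V}_0)\cap\{\vec\delta_{s_1},\ldots,\vec\delta_{s_n}\}^\perp$. The probability $p(i)=\|\ma{V}^{(n)\adjoint}\vec\delta_i\|^2$ used in Algorithm~\ref{alg:sampling_DPP_standard} is therefore $\vec\delta_i^\adjoint\Pi_n\vec\delta_i$, where $\Pi_n$ is the orthogonal projector on that intersection subspace.

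The second step is to rewrite $\Pi_n$ in closed form. Setting $\ma{E}=(\vec\delta_{s_1}|\cdots|\vec\delta_{s_n})$, the standard orthogonal complement formula within a subspace gives
\[
\Pi_n \;=\; \ma{P}_0 \;-\; \ma{P}_0\ma{E}\,(\ma{E}^\adjoint\ma{P}_0\ma{E})^{-1}\ma{E}^\adjoint\ma{P}_0.
\]
Now I plug in the identifications used by Algorithm~\ref{alg:sampling_DPP_alt}: $\ma{V}_0^\adjoint\vec\delta_i=\vec{y}_i$, $\ma{E}^\adjoint\ma{V}_0=\ma{V}_{0,\mathcal{S},[k]}$, so $\ma{E}^\adjoint\ma{P}_0\ma{E}=\ma{V}_{\mathcal{S},[k]}\ma{V}_{\mathcal{S},[k]}^\adjoint=\ma{P}_\mathcal{S}$ and $\ma{E}^\adjoint\ma{P}_0\vec\delta_i=\ma{V}_{\mathcal{S},[k]}\vec{y}_i=\ma{P}_{\mathcal{S},i}$. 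Sandwiching with $\vec\delta_i$ on both sides yields exactly
\[
\vec\delta_i^\adjoint\Pi_n\vec\delta_i \;=\; \|\vec{y}_i\|^2 - \ma{P}_{\mathcal{S},i}^\adjoint\ma{P}_\mathcal{S}^{-1}\ma{P}_{\mathcal{S},i} \;=\; p_0(i)-\ma{P}_{\mathcal{S},i}^\adjoint\ma{P}_\mathcal{S}^{-1}\ma{P}_{\mathcal{S},i},
\]
which is the update used by Algorithm~\ref{alg:sampling_DPP_alt}. Since the normalisations also agree, the two algorithms sample from the same conditional law at every step, and thus from the same $k$-DPP.

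The main obstacle is justifying the projection formula for $\Pi_n$, which requires $\ma{E}^\adjoint\ma{P}_0\ma{E}=\ma{P}_\mathcal{S}$ to be invertible. This reduces to showing that the projections $\ma{P}_0\vec\delta_{s_1},\ldots,\ma{P}_0\vec\delta_{s_n}$ are linearly independent; I would argue this by induction using the fact that at each draw, $s_n$ is selected with probability proportional to $\vec\delta_{s_n}^\adjoint\Pi_{n-1}\vec\delta_{s_n}>0$, so $\ma{P}_0\vec\delta_{s_n}$ does not lie in the span of $\ma{P}_0\vec\delta_{s_1},\ldots,\ma{P}_0\vec\delta_{s_{n-1}}$ almost surely. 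Equivalently, the Schur complement of $\ma{P}_{\mathcal{S}}$ along the freshly added row/column is strictly positive, so the running $\ma{P}_\mathcal{S}^{-1}$ is well defined throughout the $k$ iterations.
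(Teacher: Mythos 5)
Your proof is correct, but it takes a genuinely different route from the paper's. You establish a step-by-step \emph{equivalence} with alg.~\ref{alg:sampling_DPP_standard}: you identify the vector $\vec{p}$ used at iteration $n$ of alg.~\ref{alg:sampling_DPP_standard} as the diagonal of the orthogonal projector $\Pi_n$ onto $\mathrm{range}(\ma{V})\cap\mathrm{span}\{\vec{\delta}_{s_1},\ldots,\vec{\delta}_{s_n}\}^{\perp}$, write that projector in closed form as $\ma{P}-\ma{P}\ma{E}(\ma{E}^\adjoint\ma{P}\ma{E})^{-1}\ma{E}^\adjoint\ma{P}$ with $\ma{E}=(\vec{\delta}_{s_1}|\cdots|\vec{\delta}_{s_n})$, and recognize its diagonal entries as exactly the update $p_0(i)-\ma{P}_{\mathcal{S},i}^\adjoint\ma{P}_{\mathcal{S}}^{-1}\ma{P}_{\mathcal{S},i}$ of alg.~\ref{alg:sampling_DPP_alt}; correctness then follows from the known correctness of alg.~\ref{alg:sampling_DPP_standard} \cite{hough_determinantal_2006}. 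The paper instead gives a self-contained direct verification that the output law is determinantal: the Schur-complement identity $\det(\ma{P}_{\mathcal{S}_{n-1}\cup\{i\}})=p_{n-1}(i)\det(\ma{P}_{\mathcal{S}_{n-1}})$ makes the numerators of the conditional probabilities telescope to $\det(\ma{P}_{\mathcal{S}})$, and a trace/projectivity argument shows each denominator equals $k-n+1$, giving $\mathbb{P}(\mathcal{S})=\det(\ma{P}_{\mathcal{S}})/k!$ without ever invoking alg.~\ref{alg:sampling_DPP_standard}. The paper's route buys independence from the HKPV result and the explicit constant $Z=k!$; yours buys a geometric explanation of \emph{why} the alg.~\ref{alg:sampling_DPP_alt} update reproduces the orthogonalization of alg.~\ref{alg:sampling_DPP_standard}, and your induction on the linear independence of $\ma{P}\vec{\delta}_{s_1},\ldots,\ma{P}\vec{\delta}_{s_n}$ (items are only drawn with strictly positive probability) actually justifies the invertibility of $\ma{P}_{\mathcal{S}_n}$ more explicitly than the paper's brief aside that $p_n(i)\geq 0$ and $\sum_i p_n(i)\neq 0$. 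Given the lemma's phrasing and the paper's earlier citation of the standard algorithm, relying on that external result is legitimate here.
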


\begin{proof}
	Let us denote by $\mathcal{S}$ the output of alg.~\ref{alg:sampling_DPP_alt}. 
	Let us also denote by $\mathcal{S}_n$ (resp. $p_n(i)$) the sample set (resp. the value of $p(i)$) at the end of the $n$-th iteration of the loop of alg.~\ref{alg:sampling_DPP_alt}. We have : $\mathcal{S}_n = \mathcal{S}_{n-1}\cup\{s_n\}$. 
	Using the Schur complement, we have : 
	\begin{align}
	\label{eq:pn}
	\forall n\in[1,k]\;,\forall i\qquad\text{det}\left(\ma{P}_{\mathcal{S}_{n-1}\cup\{i\}}\right) &= \left(\ma{P}_{ii} - \ma{P}_{\mathcal{S}_{n-1},i}^\adjoint\ma{P}_{\mathcal{S}_{n-1}}^{-1}\ma{P}_{\mathcal{S}_{n-1},i}\right) \;\text{det}\left(\ma{P}_{\mathcal{S}_{n-1}}\right) \nonumber\\
	& = p_{n-1}(i)\;\text{det}\left(\ma{P}_{\mathcal{S}_{n-1}}\right).
	\end{align}
Given Eq.~\eqref{eq:pn}, and as $\forall\mathcal{S}, ~~\ma{P}_\mathcal{S}$ is SDP by construction (such that $\text{det}(\ma{P}_\mathcal{S})\geq0$), one can show that $p_n(i)\geq 0$ and $\sum_i{p_n}(i)\neq 0$: at each iteration $n$, the probability  $\mathbb{P}(s)=\frac{p_n(s)}{\sum_i p_n(i)}$ is thus well defined. 
	%
	The loop being repeated $k$ times, the number of samples of the output is thus necessarily equal to $k$. Thus, $\mathbb{P}(\mathcal{S})=0$ for all $\mathcal{S}$ of size different than $k$. 
	
	Let us now show that $\mathbb{P}(\mathcal{S})$ is indeed of determinantal form when $|\mathcal{S}|=k$. By construction of $\mathcal{S}$ :
	\begin{align}
	\mathbb{P}(\mathcal{S}) &= \prod_{n=1}^k \mathbb{P}(s_n|s_1, s_2,\ldots, s_{n-1})=\prod_{n=1}^k \frac{p_{n-1}(s_n)}{\sum_{i=1}^N p_{n-1}(i)}.\label{eq:prod_cond}
	\end{align}
	Writing Eq.~\eqref{eq:pn} for $i=s_n$, and iterating, one obtains  :
	$\prod_{n=1}^k p_{n-1}(s_n)=\text{det}(\ma{P}_{\mathcal{S}})$. 
	Let us finish by showing that the denominator of Eq~\eqref{eq:prod_cond} does not depend on the chosen samples. This is where the projective assumption of $\ma{P}$ is essential. One has :
	\begin{align}
	\forall n\in[1,k], \qquad \sum_{i=1}^N p_{n-1}(i) = \sum_{i=1}^N p_{0}(i) - \sum_{i=1}^N \ma{P}_{\mathcal{S}_{n-1},i}^\adjoint\ma{P}_{\mathcal{S}_{n-1}}^{-1}\ma{P}_{\mathcal{S}_{n-1},i}\nonumber
	\end{align}
	We have $\sum_{i=1}^N p_{0}(i)=\sum_{i=1}^N\norm{\vec{y}_i}^2=\text{Tr}(\ma{VV}^\adjoint)=\text{Tr}(\ma{V}^\adjoint\ma{V})=k$. 
	Moreover, by invariance of the trace to circular permutations, we have:
	\begin{align*}
	\sum_{i=1}^N \ma{P}_{\mathcal{S}_{n-1},i}^\adjoint\ma{P}_{\mathcal{S}_{n-1}}^{-1}\ma{P}_{\mathcal{S}_{n-1},i} &= \text{Tr}\left(\ma{V}\ma{V}_{\mathcal{S}_{n-1},[k]}^\adjoint\ma{P}_{\mathcal{S}_{n-1}}^{-1}\ma{V}_{\mathcal{S}_{n-1},[k]}\ma{V}^\adjoint\right)\\
	&= \text{Tr}\left(\ma{P}_{\mathcal{S}_{n-1}}^{-1}\ma{V}_{\mathcal{S}_{n-1},[k]}\ma{V}^\adjoint\ma{V}\ma{V}_{\mathcal{S}_{n-1},[k]}^\adjoint\right)\\
	&=
	\text{Tr}\left(\ma{P}_{\mathcal{S}_{n-1}}^{-1}\ma{V}_{\mathcal{S}_{n-1},[k]}\ma{V}_{\mathcal{S}_{n-1},[k]}^\adjoint\right)\\
	&=
	\text{Tr}\left(\ma{P}_{\mathcal{S}_{n-1}}^{-1}\ma{P}_{\mathcal{S}_{n-1}}\right)\\
	&= \text{Tr}(\ma{I}_{n-1}) = n-1,
	\end{align*}
	Thus 
	\begin{align}
	&\forall\mathcal{S}\text{ s.t. } |\mathcal{S}|\neq k, ~ \mathbb{P}(\mathcal{S}) = 0\\
	\text{and }~~\qquad&\forall\mathcal{S}\text{ s.t. } |\mathcal{S}|= k, ~ \mathbb{P}(\mathcal{S}) = \frac{1}{Z} \det(\ma{P}_{\mathcal{S}})\text{~with~} Z = \prod_{n=1}^k k-n+1=k\,!
	\end{align}
	which ends the proof.
\end{proof}

Note that both Algorithms~\ref{alg:sampling_DPP_standard} and~\ref{alg:sampling_DPP_alt} have a computation cost of order $\mathcal{O}(Nk^3)$ such that the overall sampling algorithm given the eigendecomposition of $\ma{L}$ is in average of the order $\mathcal{O}(N\mu^3)$. 
This is suboptimal. In fact, the scalar $\ma{P}_{\mathcal{S},i}^\adjoint \ma{P}_\mathcal{S}^{-1} \ma{P}_{\mathcal{S},i}$ is computed from scratch at each iteration of the loop even though one could use Woodbury's identity to take advantage of computations done at past iterations. This observation leads to alg.~\ref{alg:sampling_DPP_efficient}. We have the following equivalence:

\begin{lemma}
	Alg.~\ref{alg:sampling_DPP_efficient} is equivalent to  alg.~\ref{alg:sampling_DPP_alt}: it also samples a $k$-DPP with projective $L$-ensemble $\ma{P}=\ma{V} \ma{V}^\adjoint$.
\end{lemma}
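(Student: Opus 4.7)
The plan is to show that, run on the same external randomness, alg.~\ref{alg:sampling_DPP_efficient} and alg.~\ref{alg:sampling_DPP_alt} produce the same sequence $s_1,\ldots,s_k$, which is enough to conclude since the previous lemma already established that alg.~\ref{alg:sampling_DPP_alt} samples the desired $k$-DPP. Concretely, I fix a trajectory $s_1,\ldots,s_k$ and proceed by induction on $n$, with induction hypothesis that the probability vector $\vec{p}$ held at the start of iteration $n$ in alg.~\ref{alg:sampling_DPP_efficient} equals the $\vec{p}_{n-1}$ defined in alg.~\ref{alg:sampling_DPP_alt}; since the drawing step is identical, this equality of distributions at every iteration yields equality in law of the outputs.

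The key computation is a Schur-complement / Woodbury argument on the nested matrices $\ma{P}_{\mathcal{S}_n}$. Writing
$$\ma{P}_{\mathcal{S}_n}=\begin{pmatrix}\ma{P}_{\mathcal{S}_{n-1}} & \ma{P}_{\mathcal{S}_{n-1},s_n}\\ \ma{P}_{s_n,\mathcal{S}_{n-1}} & \ma{P}_{s_n,s_n}\end{pmatrix},\qquad \vec{w}_n:=\ma{P}_{\mathcal{S}_{n-1}}^{-1}\ma{P}_{\mathcal{S}_{n-1},s_n},$$
block inversion expresses $\ma{P}_{\mathcal{S}_n}^{-1}$ in terms of $\ma{P}_{\mathcal{S}_{n-1}}^{-1}$, $\vec{w}_n$, and the Schur complement $p_{n-1}(s_n)=\ma{P}_{s_n,s_n}-\ma{P}_{s_n,\mathcal{S}_{n-1}}\vec{w}_n$ (positive by the analysis of the previous lemma). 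Substituting this into $p_n(i)=p_0(i)-\ma{P}_{\mathcal{S}_n,i}^\adjoint\ma{P}_{\mathcal{S}_n}^{-1}\ma{P}_{\mathcal{S}_n,i}$ and using $p_{n-1}(i)=p_0(i)-\ma{P}_{\mathcal{S}_{n-1},i}^\adjoint\ma{P}_{\mathcal{S}_{n-1}}^{-1}\ma{P}_{\mathcal{S}_{n-1},i}$, all cross terms telescope and one obtains the rank-one update
$$p_n(i)=p_{n-1}(i)-\frac{\bigl(\ma{P}_{s_n,i}-\vec{w}_n^\adjoint\ma{P}_{\mathcal{S}_{n-1},i}\bigr)^2}{p_{n-1}(s_n)}.$$

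It then remains to identify the right-hand side with the update performed by alg.~\ref{alg:sampling_DPP_efficient}: the algorithm stores auxiliary vectors $\vec{c}_n\in\mathbb{R}^N$ with $c_n(i):=\bigl(\ma{P}_{s_n,i}-\vec{w}_n^\adjoint\ma{P}_{\mathcal{S}_{n-1},i}\bigr)/\sqrt{p_{n-1}(s_n)}$ and applies $p_n(i)\leftarrow p_{n-1}(i)-c_n(i)^2$, exactly matching the identity above. The cost saving follows because each new $c_n(i)$ can be computed from $\ma{P}_{s_n,i}$ and the previously stored $c_1(i),\ldots,c_{n-1}(i)$ by an orthogonalisation-like recursion that avoids ever forming or inverting $\ma{P}_{\mathcal{S}_{n-1}}$.

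The main obstacle will not be the Schur algebra, which is routine, but the bookkeeping needed to express $\vec{w}_n^\adjoint\ma{P}_{\mathcal{S}_{n-1},i}$ in terms of the previously stored $\vec{c}_j$'s. Concretely, I expect to prove by a secondary induction the identity $\vec{w}_n^\adjoint\ma{P}_{\mathcal{S}_{n-1},i}=\sum_{j=1}^{n-1}c_j(s_n)\,c_j(i)$, which is essentially a Gram–Schmidt statement about the restricted rows $\ma{V}_{\mathcal{S}_{n-1},[k]}$ and which justifies the incremental implementation. Once this auxiliary identity is in place, substitution into the rank-one update closes the induction and yields the equivalence claimed in the lemma.
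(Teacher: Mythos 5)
Your proposal is correct and follows essentially the same route as the paper: a Schur/Woodbury block-inversion rank-one update of $p_n(i)$ combined with an induction identifying the algorithm's Gram--Schmidt products $\vec{f}_l^\adjoint\vec{y}_i$ with the quantities $z_l(i)/\sqrt{z_l(s_l)}$. The ``secondary induction'' you defer, namely $\vec{w}_n^\adjoint\ma{P}_{\mathcal{S}_{n-1},i}=\sum_{l<n}c_l(s_n)c_l(i)$, is exactly the bilinear strengthening $\sum_{l\le n}(\vec{f}_l^\adjoint\vec{y}_i)(\vec{f}_l^\adjoint\vec{y}_j)=\ma{P}_{\mathcal{S}_n,i}^\adjoint\ma{P}_{\mathcal{S}_n}^{-1}\ma{P}_{\mathcal{S}_n,j}$ that the paper takes as its induction hypothesis, so the plan closes as stated.
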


\begin{algorithm}
	\caption{Efficient $k$-DPP sampling algorithm with projective $L$-ensemble $\ma{P}=\ma{V} \ma{V}^\adjoint$}
	\label{alg:sampling_DPP_efficient}
	\begin{algorithmic}
		\Input $\ma{V}\in\mathbb{R}^{N\times k}$ such that $\ma{V}^\adjoint \ma{V}=\ma{I}_k$\\
		Write $\forall i, ~~\vec{y}_i=\ma{V}^\adjoint\vec{\delta_i}\in\mathbb{R}^{k}$.\\
		$\mathcal{S} \leftarrow \emptyset$\\
		Define $\vec{p}\in\mathbb{R}^N$ : $\forall i, \quad p(i) = \norm{\vec{y}_i}^2$\\
		\textbf{for} $n=1,\ldots,k$ \textbf{do}:\\
		\hspace{0.5cm}$\bm{\cdot}$ Draw $s_n$ with prob. $\mathbb{P}(s)=p(s)/\sum_{i}p(i)$\\
		\hspace{0.5cm}$\bm{\cdot}$ $\mathcal{S} \leftarrow \mathcal{S}\cup\{s_n\}$\\
		\hspace{0.5cm}$\bm{\cdot}$ Compute 
		$\vec{f}_n = \vec{y}_{s_n} - \sum_{l=1}^{n-1} \vec{f}_l(\vec{f}_l^\adjoint\vec{y}_{s_n})\in\mathbb{R}^{k}$\\
		\hspace{0.5cm}$\bm{\cdot}$ Normalize $\vec{f}_n \leftarrow \vec{f}_n / \sqrt{\vec{f}_n^\adjoint\vec{y}_{s_n}}$\\
		\hspace{0.5cm}$\bm{\cdot}$ Update $\vec{p}$ : $\forall i\quad p(i) \leftarrow p(i) - (\vec{f}_n^\adjoint\vec{y}_i)^2$\\
		\textbf{end for}
		\Output $\mathcal{S}$ of size $k$.
	\end{algorithmic}
\end{algorithm}

\begin{proof}
	Let us denote by $\mathcal{S}_n$ (resp. $p_n(i)$) the sample set (resp. the value of $p(i)$) at the end of the $n$-th iteration of the loop. Let us also denote by $p_0(i)$ the initial value of $p(i)$. All we need to show is that the $p_n(i)$ are equal in both algorithms. In alg.~\ref{alg:sampling_DPP_efficient}: $p_n(i) = p_{n-1}(i) - (\vec{f}_n^\adjoint\vec{y}_i)^2 = p_0(i) - \sum_{l=1}^{n} (\vec{f}_l^\adjoint\vec{y}_i)^2$ (where the vectors $\vec{f}_l$ and $\vec{y}_i$ are defined in the algorithm). Comparing with the  $p_n(i)$ of alg.~\ref{alg:sampling_DPP_alt}, all we need to show is:
	\begin{align}
	\forall n\forall i \quad \quad
	\sum_{l=1}^{n} (\vec{f}_l^\adjoint\vec{y}_i)^2=  \ma{P}_{\mathcal{S}_n,i}^\adjoint \ma{P}_{\mathcal{S}_n}^{-1} \ma{P}_{\mathcal{S}_n,i}.
	\end{align}
	We will show more generally that:
	\begin{align}
	\label{eq:to_proove_primal}
	\forall n, \forall(i,j)  \quad \quad
	\sum_{l=1}^{n} (\vec{f}_l^\adjoint \vec{y}_i) (\vec{f}_l^\adjoint \vec{y}_j) =  \ma{P}_{\mathcal{S}_n,i}^\adjoint \ma{P}_{\mathcal{S}_n}^{-1} \ma{P}_{\mathcal{S}_n,j}.
	\end{align}
	To do so, we propose a recurrence. Before we start, let us note that, by definition:
	\begin{align}
	\forall (i,j)\qquad \vec{y}_i^\adjoint\vec{y}_j=\vec{\delta}_i^\adjoint\ma{VV}^\adjoint\vec{\delta}_j=\vec{\delta}_i^\adjoint\ma{P}\vec{\delta}_j=\ma{P}_{ij}.
	\end{align}\\
	\textit{Initialisation}. It is true for $n=1$, where $\mathcal{S}_1$ is reduced to $\{s_1\}$  and:
	\begin{align}
	\ma{P}_{\mathcal{S}_1}^{-1} =  \frac{1}{\ma{P}_{s_1,s_1}}
	\end{align}
	is a scalar. Indeed, the following holds for all $(i,j)$:
	\begin{align}
	(\vec{f}_1^\adjoint\vec{y}_i) (\vec{f}_1^\adjoint \vec{y}_j) = \frac{(\vec{y}_{s_1}^\adjoint\vec{y}_i)(\vec{y}_{s_1}^\adjoint \vec{y}_j)}{\norm{\vec{y}_{s_1}}^2} = \frac{\ma{P}_{i,s_1}\ma{P}_{s_1,j}}{\ma{P}_{s_1,s_1}} = \ma{P}_{\mathcal{S}_1,i}^\adjoint \ma{P}_{\mathcal{S}_1}^{-1} \ma{P}_{\mathcal{S}_1,j}.
	\end{align}
	\textit{Hypothesis}. We assume that Eq.~\eqref{eq:to_proove_primal} is true at iteration $n-1$.\\
	\textit{Recurrence}. Let us show it is also true at iteration $n$. Using Woodbury's identity on $\ma{P}_{\mathcal{S}_n}^{-1}$, we show that:
	 \begin{align}
	 \label{eq:en_passant}
	\ma{P}_{\mathcal{S}_n,i}^\adjoint \ma{P}_{\mathcal{S}_n}^{-1} \ma{P}_{\mathcal{S}_n,j} &=
	\ma{P}_{\mathcal{S}_{n-1},i}^\adjoint \ma{P}_{\mathcal{S}_{n-1}}^{-1} \ma{P}_{\mathcal{S}_{n-1},j} + \frac{z_n(i)z_n(j)}{z_n(s_n)},
	\end{align}
	where $z_n(i)=\ma{P}_{s_{n},i}-\ma{P}_{\mathcal{S}_{n-1},s_n}^\adjoint \ma{P}_{\mathcal{S}_{n-1}}^{-1}\ma{P}_{\mathcal{S}_{n-1},i}$.
	Applying the hypothesis to $\ma{P}_{\mathcal{S}_{n-1},i}^\adjoint \ma{P}_{\mathcal{S}_{n-1}}^{-1} \ma{P}_{\mathcal{S}_{n-1},j}$ in Eq.~\eqref{eq:en_passant}, the proof boils down to showing that:
	\begin{align}
	\label{eq:reste_a_montrer_primal}
	\forall i,j \quad \quad (\vec{f}_n^\adjoint\vec{y}_i) (\vec{f}_n^\adjoint \vec{y}_j) = \frac{z_n(i)z_n(j)}{z_n(s_n)}.
	\end{align}
	By construction of Algorithm~\ref{alg:sampling_DPP_efficient}, $\vec{f}_n^\adjoint\vec{y}_i$ reads :
	 \begin{align}
	\label{eq:f_primal}
	\forall i\quad\quad \vec{f}_n^\adjoint\vec{y}_i = \frac{\ma{P}_{s_{n},i} - \sum_{l=1}^{n-1}(\vec{f}_l^\adjoint \vec{y}_i) (\vec{f}_l^\adjoint \vec{y}_{s_n})}{\sqrt{\ma{P}_{s_{n},s_n} - \sum_{l=1}^{n-1}(\vec{f}_l^\adjoint \vec{y}_{s_n})^2}}.
	\end{align}
	Applying once more the hypothesis on $\sum_{l=1}^{n-1}(\vec{f}_l^\adjoint \vec{y}_i) (\vec{f}_l^\adjoint \vec{y}_{s_n})$ and $\sum_{l=1}^{n-1}(\vec{f}_l^\adjoint \vec{y}_{s_n})^2$, one obtains:
	\begin{align}
	\forall i\quad\quad 
	\vec{f}_n^\adjoint\vec{y}_i = \frac{z_n(i)}{\sqrt{z_n(s_n)}},\vspace{-0.3cm}
	\end{align}
	which shows Eq.~\eqref{eq:reste_a_montrer_primal} and ends the proof.
\end{proof}

Alg.~\ref{alg:sampling_DPP_efficient} samples a $k$-DPP with a computation cost of order $\mathcal{O}(Nk^2)$ instead of $\mathcal{O}(Nk^3)$, thereby gaining an  order of magnitude. Moreover, it is straightforward to implement.

\subsection{Dual (low-rank) representation}

The gain obtained in the previous section is not significant in the general case, as the limiting step of the overall sampling algorithm is any case the diagonalisation of $\ma{L}$, that costs $\mathcal{O}(N^3)$. Thankfully, in many  applications, a dual representation of $\ma{L}$  exists, \ie, a representation of $\ma{L}$ in a low-rank form 
\begin{align}
\ma{L}=\ma{\Psi}^\adjoint\ma{\Psi},
\end{align}
where $\ma{\Psi}=(\vec{\psi}_1|\ldots|\vec{\psi_N})\in\mathbb{R}^{d\times N}$ with $d$ a dimension that can be significantly smaller than $N$. In this case, we will see that preferring alg.~\ref{alg:sampling_DPP_efficient} to alg.~\ref{alg:sampling_DPP_standard} does induce a significant gain in the overall  computation time. 
The dual representation enables us to circumvent the $\mathcal{O}(N^3)$ diagonalization cost of $\ma{L}$ and only diagonalize the dual form:
\begin{align}
\ma{C} = \ma{\Psi}\ma{\Psi}^\adjoint  \in\mathbb{R}^{d\times d},
\end{align}
costing only $\mathcal{O}(Nd^2)$ (time to compute $\ma{C}$ from $\ma{\Psi}$ and to compute the low-dimensional diagonalization). $\ma{C}$'s eigendecomposition yields:
\begin{align}
\ma{C} = \ma{R}\ma{E}\ma{R}^\adjoint,
\end{align}
with $\ma{R}=(\vec{r}_1|\ldots|\vec{r}_{d})\in\mathbb{R}^{d\times d}$ the basis of eigenvectors and $\ma{E}\in\mathbb{R}^{d\times d}$ the diagonal matrix of eigenvalues such that $0\leq e_1\leq\ldots\leq e_{d}$. 
%
One can show (e.g., see Proposition 3.1 in~\cite{kulesza_determinantal_2012}) that all eigenvectors associated to non-zero eigenvalues of $\ma{L}$ can be recovered from $\ma{C}$'s eigendecomposition. More precisely, if $\vec{r}_k$ is an eigenvector of $\ma{C}$ associated to eigenvalue $e_k$, then:
\begin{align}
\vec{u}_k = \frac{1}{\sqrt{e_k}} \ma{\Psi}^\adjoint \vec{r}_k
\end{align}
is a normalized eigenvector of $\ma{L}$ with same eigenvalue. Thus, given the eigendecomposition of the dual form, the standard algorithm boils down to:
\begin{enumerate}
	\item[i/] Sample eigenvectors. Draw $N$ Bernoulli variables with parameters $e_n/(1+e_n)$: for $n=1,\ldots,d$, add $n$ to the set of sampled indices $\mathcal{K}$ with probability $e_n/(1+e_n)$. Denote by $k$ the number of elements of $\mathcal{K}$. $k$ is necessarily smaller than $d$. 
	\item[ii/] Denote by $\ma{W}\in\mathbb{R}^{d\times k}$ the matrix concatenating all eigenvectors $\vec{r}_n$ such that $n\in\mathcal{K}$, and $\tilde{\ma{E}}=\diag(\{e_n\}_{n\in\mathcal{K}})\in\mathbb{R}^{k\times k}$ the diagonal matrix of associated eigenvalues. Run alg.~\ref{alg:sampling_DPP_standard},~\ref{alg:sampling_DPP_alt} or~\ref{alg:sampling_DPP_efficient}  to sample a $k$-DPP with projective $L$-ensemble $\ma{P}=\ma{V} \ma{V}^\adjoint$ where  $\ma{V}=\ma{\Psi}^\adjoint\ma{W}\tilde{\ma{E}}^{-1/2}\in\mathbb{R}^{N\times k}$ concatenates the reconstructed eigenvectors in dimension $N$. Note that $\ma{V}^\adjoint \ma{V}=\ma{I}_k$. 
\end{enumerate}
The reconstruction operation $\ma{V}=\ma{\Psi}^\adjoint\ma{W}\tilde{\ma{E}}^{-1/2}$ costs $\mathcal{O}(Ndk)$. Thus, preferring alg.~\ref{alg:sampling_DPP_efficient} to alg.~\ref{alg:sampling_DPP_standard} in step ii/ lowers the total computation time of sampling a DPP from $\mathcal{O}(N(\mu^3+d^2)$ in average to $\mathcal{O}(Nd^2)$ (as $d$ is necessarily larger than $\mu$). 

\textbf{A last algorithm in case of memory issues.} If, for memory optimization reasons, one does not desire to reconstruct the eigenvectors in dimension $N$, one may slightly alter alg.~\ref{alg:sampling_DPP_efficient} by noticing that 
$\vec{y}_i$, in the first line of the algorithm, reads:
\begin{align}
\label{eq:yi}
	\vec{y}_i=\ma{V}^\adjoint\vec{\delta}_i = \tilde{\ma{E}}^{-1/2}\ma{W}^\adjoint\ma{\Psi}\vec{\delta}_i = \tilde{\ma{E}}^{-1/2}\ma{W}^\adjoint\vec{\psi}_i.
\end{align}
One may thus first precompute $\tilde{\ma{C}}=\ma{W}\tilde{\ma{E}}^{-1}\ma{W}^\adjoint$, then directly work with $\vec{\psi}_i$ instead of $\vec{y}_i$ and replace 1)~$\norm{\vec{y}_i}^2$ by $\vec{\psi}_i^\adjoint \tilde{\ma{C}}\vec{\psi}_i$ and 2)~all scalar products of the type $\vec{f}_n^\adjoint\vec{y}_i$  by $=\vec{f}_n^\adjoint\tilde{\ma{C}}\vec{y}_i$. This leads to alg.~\ref{alg:sampling_DPP_efficient_dual}. This algorithm is nevertheless slightly heavier computationally than alg.~\ref{alg:sampling_DPP_efficient}: it indeed runs in $\mathcal{O}(Ndk)$ instead of $\mathcal{O}(Nk^2)$. 

\begin{algorithm}
	\caption{Efficient $k$-DPP sampling algorithm in case of a dual representation.}
	\label{alg:sampling_DPP_efficient_dual}
	\begin{algorithmic}
		\Input $\ma{\Psi}=(\vec{\psi_1|\ldots|\vec{\psi}_N})\in\mathbb{R}^{d\times N}$,  $\tilde{\ma{C}}\in\mathbb{R}^{d\times d}$ as defined in the text\\
		$\mathcal{S} \leftarrow \emptyset$\\
		Define $\vec{p}\in\mathbb{R}^N$ : $\forall i, \quad p(i) = \vec{\psi}_i^\adjoint\tilde{\ma{C}}\vec{\psi}_i$\\
		\textbf{for} $n=1,\ldots,k$ \textbf{do}:\\
		\hspace{0.5cm}$\bm{\cdot}$ Draw $s_n$ with proba $\mathbb{P}(s)=p(s)/\sum_{i}p(i)$\\
		\hspace{0.5cm}$\bm{\cdot}$ $\mathcal{S} \leftarrow \mathcal{S}\cup\{s_n\}$\\
		\hspace{0.5cm}$\bm{\cdot}$ Compute 
		$\vec{f}_n = \vec{\psi}_{s_n} - \sum_{l=1}^{n-1} \vec{f}_l(\vec{f}_l^\adjoint\tilde{\ma{C}}\vec{\psi}_{s_n})\in\mathbb{R}^{d}$\\
		\hspace{0.5cm}$\bm{\cdot}$ Normalize $\vec{f}_n \leftarrow \vec{f}_n / \sqrt{\vec{f}_n^\adjoint\tilde{\ma{C}}\vec{\psi}_{s_n}}$\\
		\hspace{0.5cm}$\bm{\cdot}$ Update $\vec{p}$ : $\forall i\quad p(i) \leftarrow p(i) - (\vec{f}_n^\adjoint\tilde{\ma{C}}\vec{\psi}_i)^2$\\
		\textbf{end for}
		\Output $\mathcal{S}$ of size $k$.
	\end{algorithmic}
\end{algorithm}

\begin{lemma}
	alg.~\ref{alg:sampling_DPP_efficient_dual} with inputs $\ma{\Psi}$ and $\tilde{\ma{C}}$ is equivalent to  alg.~\ref{alg:sampling_DPP_efficient} with input $\ma{V}=\ma{\Psi}^\adjoint\ma{W}\tilde{\ma{E}}^{-1/2}$.
\end{lemma}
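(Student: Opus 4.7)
The plan is to reduce both algorithms to the same sequence of probability vectors $\vec{p}$ by exploiting the identity provided by equation~\eqref{eq:yi}. First, from $\vec{y}_i = \tilde{\ma{E}}^{-1/2}\ma{W}^\adjoint\vec{\psi}_i$ and the definition $\tilde{\ma{C}} = \ma{W}\tilde{\ma{E}}^{-1}\ma{W}^\adjoint$, one immediately obtains $\vec{y}_i^\adjoint \vec{y}_j = \vec{\psi}_i^\adjoint \tilde{\ma{C}} \vec{\psi}_j$ for every pair $(i,j)$. In particular the two initial probability vectors $p(i) = \norm{\vec{y}_i}^2$ and $p(i) = \vec{\psi}_i^\adjoint \tilde{\ma{C}}\vec{\psi}_i$ coincide, so the two algorithms can be coupled by drawing the first index $s_1$ with the same randomness.

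Next I would prove by induction on $n$ that, under this coupling, the normalized iterates of alg.~\ref{alg:sampling_DPP_efficient} and alg.~\ref{alg:sampling_DPP_efficient_dual}, which I denote $\vec{f}_n^{\text{prim}}$ and $\vec{f}_n^{\text{dual}}$ respectively, are linked by $\vec{f}_n^{\text{prim}} = \tilde{\ma{E}}^{-1/2}\ma{W}^\adjoint \vec{f}_n^{\text{dual}}$. The base case $n=1$ is direct: both vectors are $\vec{y}_{s_1}$ and $\vec{\psi}_{s_1}$ divided by the same normalization scalar, which are equal by the identity above. For the inductive step, the induction hypothesis yields $\vec{f}_l^{\text{prim},\adjoint}\vec{y}_j = \vec{f}_l^{\text{dual},\adjoint}\tilde{\ma{C}}\vec{\psi}_j$ for all $l<n$ and all $j$. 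Applying the linear map $\tilde{\ma{E}}^{-1/2}\ma{W}^\adjoint$ to the pre-normalization definition of $\vec{f}_n^{\text{dual}}$ then reproduces term by term the pre-normalization definition of $\vec{f}_n^{\text{prim}}$, and a short computation shows the two normalization scalars $\vec{f}_n^{\text{prim},\adjoint}\vec{y}_{s_n}$ and $\vec{f}_n^{\text{dual},\adjoint}\tilde{\ma{C}}\vec{\psi}_{s_n}$ are also equal, so the linear relation survives the normalization step.

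The conclusion follows immediately: the relation established above gives $\vec{f}_n^{\text{prim},\adjoint}\vec{y}_i = \vec{f}_n^{\text{dual},\adjoint}\tilde{\ma{C}}\vec{\psi}_i$ for every $n$ and $i$, so the update $p(i)\leftarrow p(i)-(\vec{f}_n^{\text{prim},\adjoint}\vec{y}_i)^2$ in alg.~\ref{alg:sampling_DPP_efficient} produces the same vector $\vec{p}$ as the corresponding dual update. Both algorithms therefore draw $s_{n+1}$ from the same distribution at every iteration, and by induction their output sets $\mathcal{S}$ are identically distributed. The main subtlety is one of bookkeeping rather than a real obstacle: the map $\tilde{\ma{E}}^{-1/2}\ma{W}^\adjoint\colon\mathbb{R}^d\to\mathbb{R}^k$ is not injective when $d>k$, so the equivalence cannot be expressed by an identity between vectors living in different ambient spaces; the argument must go through the scalar products $\vec{f}_n^\adjoint\vec{y}_i$, which are the only quantities that actually drive the sampling.
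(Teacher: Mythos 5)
Your proof is correct and follows essentially the same route as the paper: both arguments couple the two algorithms and prove by induction on $n$ that the inner products driving the sampling coincide, i.e.\ $\vec{f}_n^\adjoint\vec{y}_i=\tilde{\vec{f}}_n^\adjoint\tilde{\ma{C}}\vec{\psi}_i$, starting from $\vec{y}_i^\adjoint\vec{y}_j=\vec{\psi}_i^\adjoint\tilde{\ma{C}}\vec{\psi}_j$. The only cosmetic difference is that your induction hypothesis is the vector relation $\vec{f}_n=\tilde{\ma{E}}^{-1/2}\ma{W}^\adjoint\tilde{\vec{f}}_n$ while the paper inducts directly on the scalar identity; given that the $\vec{y}_i$ span $\mathbb{R}^k$ these are equivalent, and the computations are the same.
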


\begin{proof}
	We show a point-wise equivalence. Given Eq.~\eqref{eq:yi}, the initial value of $p(i)$ is the same in both algorithms:
	\begin{align}
		\norm{\vec{y}_i}^2 = \vec{\psi}_i^\adjoint\ma{W}\tilde{\ma{E}}^{-1}\ma{W}^\adjoint\vec{\psi}_i = \vec{\psi}_i^\adjoint \tilde{\ma{C}}\vec{\psi}_i.
	\end{align}
	More generally, the following holds for all $(i,j)$:
	\begin{align}
	\label{eq:gene}
	\vec{y}_i^\adjoint \vec{y}_j= \vec{\psi}_i^\adjoint \tilde{\ma{C}}\vec{\psi}_j.
	\end{align}
	To differentiate notations, let us write $\tilde{\vec{f}}_n$ the vectors $\vec{f}_n$ defined in alg.~\ref{alg:sampling_DPP_efficient_dual}, and keep the notation $\vec{f}_n$ for the ones defined in alg.~\ref{alg:sampling_DPP_efficient}. 
	We then need to show that the values of $p(i)$ are the same in both algorithm, that is:
	\begin{align}
	\forall i, n \qquad \tilde{\vec{f}}_n^\adjoint\tilde{\ma{C}}\vec{\psi}_i = \vec{f}_n^\adjoint\vec{y}_i.
	\end{align}
	We prove this with a recurrence. 
	\textit{Initialization.} It is true for $n=1$:
	\begin{align}
	 \tilde{\vec{f}}_1^\adjoint\tilde{\ma{C}}\vec{\psi}_i = \frac{\vec{\psi}_{s_1}^\adjoint\tilde{\ma{C}}\vec{\psi}_i}{\sqrt{\vec{\psi}_{s_1}^\adjoint\tilde{\ma{C}}\vec{\psi}_{s_1}}}=\frac{\vec{y}_{s_1}^\adjoint\vec{y}_i}{\sqrt{\vec{y}_{s_1}^\adjoint\vec{y}_{s_1}}}= \vec{f}_1^\adjoint\vec{y}_i. 
	\end{align}
	\emph{Hypothesis.} We assume it true for all integers strictly inferior to $n$. \emph{Recurrence.} Let us show it is true for $n$. We have:
	\begin{align}
	\tilde{\vec{f}}_n^\adjoint\tilde{\ma{C}}\vec{\psi}_i = \frac{\vec{\psi}_{s_n}^\adjoint\tilde{\ma{C}}\vec{\psi}_i - \sum_{l=1}^{n-1}(\tilde{\vec{f}}_l^\adjoint\tilde{\ma{C}}\vec{\psi}_i)(\tilde{\vec{f}}_l^\adjoint\tilde{\ma{C}}\vec{\psi}_{s_n})}{\sqrt{\vec{\psi}_{s_n}^\adjoint\tilde{\ma{C}}\vec{\psi}_{s_n}-\sum_{l=1}^{n-1}(\tilde{\vec{f}}_l^\adjoint\tilde{\ma{C}}\vec{\psi}_{s_n})^2}}, 
	\end{align}
	such that, by hypothesis, and using Eq.~\eqref{eq:gene}
	\begin{align}
	\tilde{\vec{f}}_n^\adjoint\tilde{\ma{C}}\vec{\psi}_i = \frac{\vec{y}_{s_n}^\adjoint\vec{y}_i - \sum_{l=1}^{n-1}(\vec{f}_l^\adjoint\vec{y}_i)(\vec{f}_l^\adjoint\vec{y}_{s_n})}{\sqrt{\vec{y}_{s_n}^\adjoint\vec{y}_{s_n}-\sum_{l=1}^{n-1}(\vec{f}_l^\adjoint\vec{y}_{s_n})^2}}=\vec{f}_n^\adjoint\vec{y}_i,
	\end{align}
	thus ending the proof.
\end{proof}

\subsection{Implementation details}

Numerical difficulties can creep in when $\mu$ is large, regardless of which algorithm is used. From a quick glance at alg. \ref{alg:sampling_DPP_efficient_dual}, we see that $\vec{p}$ decreases at every step, and when implemented in finite precision it may go negative. We suggest setting it to zero manually at indices already sampled, and setting small negative values to 0. The difficulties are greatest when sampling DPPs that are highly repulsive, in which case one may not be able to sample more than a few points (unless one is willing to implement the algorithm in multiple precision arithmetic, which is rather slow to run). 

Depending on the programming language and the linear algebra backend, it may be much more efficient to implement the following step:
\begin{equation}
  \label{eq:expensive-step}
  \vec{f}_n = \vec{y}_{s_n} - \sum_{l=1}^{n-1} \vec{f}_l(\vec{f}_l^\adjoint\vec{y}_{s_n})
\end{equation}
as two matrix multiplications rather than a loop over $l$. Simply stack $\vec{f}_{1} \ldots \vec{f}_{n-1}$ in a matrix  $\mathbf{F}_{n-1}$, and compute the equivalent form:
\begin{equation}
  \label{eq:expensive-step-matrix}
  \vec{f}_n = \vec{y}_{s_n} - \mathbf{F}_{n-1} \mathbf{F}_{n-1}^{t} \vec{y}_{s_n}
\end{equation}
One benefit of the above formulation is that it can take advantage of a parallel BLAS (and indeed, this is the only step in alg. \ref{alg:sampling_DPP_efficient} where parallelisation could be used in a meaningful way). 

\section{Conclusion}

We have described two algorithms for exact sampling of discrete DPPs. For low-rank $L$-ensembles, we have found in practice that well-implemented exact algorithms are often competitive with approximate samplers like the Gibbs sampler \cite{li_fast_2016}. The great challenge for both types of algorithms lies in scaling in $\mu$: it is easy enough to sample from very large datasets ($N$ in the millions), but because of the quadratic scaling in $\mu$ one is limited to taking small samples ($\mu$ in the hundreds). Because increasing $\mu$ also means increasing the rank of the L-matrix, one ends up paying a double penalty: when forming the $L$-ensemble, and during sampling.

One solution would be to formulate a sparse $L$-ensemble, in which case alg. \ref{alg:sampling_DPP_efficient} carries over in sparse matrix algebra. Another, perhaps simpler approach is to subdivide the dataset ahead of time and use DPPs in each subset. We plan to investigate these possibilities in future work.

	\section*{Acknowledgements}
	\funds. 
\else
	\maketitle

\fi

\ifels
	\bibliographystyle{elsarticle-num}
	\section*{References}
\else
	\bibliographystyle{IEEEtran}
\fi
\bibliography{biblio}

\end{document}